\DeclareMathAlphabet{\mymathbb}{U}{BOONDOX-ds}{m}{n}
\newtheorem{definition}{Definition}
\newtheorem{theorem}{Theorem}
\newtheorem{lemma}{Lemma}
\newtheorem{example}{Example}
\title{\LARGE \bf
Data-Driven Inverse of Linear Systems \\ and Application to Disturbance Observers}
\author{Yongsoon Eun, Jaeho Lee, and Hyungbo Shim
\thanks{This work was supported by the National Research Foundation of Korea(NRF) grant funded by the Korea government(MSIT)(NRF2022R1A2C1010777).}%
\thanks{Yongsoon Eun and Jaeho Lee are with the Department of Electrical Engineering and Computer Science, DGIST, Daegu 42988, South Korea. {\tt\small \{yeun, jaeho.lee\} @ dgist.ac.kr}}%
\thanks{Hyungbo Shim is with ASRI, Department of Electrical and Computer Engineering, Seoul National University, Seoul, Korea. {\tt\small hshim@snu.ac.kr}}%
}
\begin{document}

\maketitle

\begin{abstract}
This work develops a data-based construction of inverse dynamics for LTI systems. 
Specifically, the problem addressed here is to find an input sequence from the corresponding output sequence based on pre-collected input and output data. 
The problem can be considered as a reverse of the recent use of behavioral approach, in which the output sequence is obtained for a given input sequence by solving an equation formed by pre-collected data. 
The condition under which the problem gives a solution is investigated and turns out to be $L$-delay invertibility of the plant and a certain degree of persistent excitation of the data input.
The result is applied to form a data-driven disturbance observer.
The plant dynamics augmented by the data-driven disturbance observer exhibits disturbance rejection without the model knowledge of the plant. 
\end{abstract}

\section{Introduction}\label{sec:introduction}
\noindent Seminal work of the behavioral approach \cite{Willems} for dynamic systems has been receiving growing attention in recent years \cite{Dorfler1,Coulson,Carlet}. Instead of using matrices $A$, $B$, $C$ and $D$ for linear dynamics, which are often obtained from the first principle on the target dynamics, 
the behavioral approach uses a set of equations to determine if a pair of input and output belongs to the target dynamics. The equations involve Hankel matrices for linear dynamics, which are obtained by input/output trajectories that satisfy a condition of persistently exciting. This approach suits when data are available but applying the first principle for modeling is not straightforward. 
Recent advances in this area include \cite{Waarde,Persis,Coulson,Dorfler2}.

Particularly, as given in \cite{MWH06}, the fundamental result of this approach is that we can determine a future output for a given input using Hankel matrices built by using previously collected input and output data. No identification of system matrices (typically $A$, $B$, $C$, $D$) is necessary. A question that this paper is concerned is the inverse of what is just described: {\em can we determine the input that generated a given output using  previously collected input and output data?} In other words, the question pertains to finding conditions and methods how we can build an inverse dynamics of the given system using collected data. If this is successfully carried out, an immediate application is to build a disturbance observer \cite{Ohishi,Shim} with data, that identifies the disturbance that affected the output. 
Another application is to find a suitable constrained input that yields an output close to the desired output under constraints.

Similar question has been asked and answered in \cite{Markovsky,MWH06}. However, the result is restricted to the case where $D$ is invertible. This work does not assume the invertibility of $D$. Instead the notion of $L$-delay invertibility is invoked from the literature \cite{Sain}. 

This paper is organized as follows: Section II provides a brief review of the invertibility of discrete-time LTI systems. Section III gives main results. An application of disturbance observer is given in Section IV. 
Section V concludes the paper.

\section{Review of LTI System Invertibility}
Consider a discrete-time linear-time-invariant system 
\begin{subequations} \label{sseq}
  \begin{align}
    x(k+1) &= Ax(k) + Bu(k) \label{stateupdate} \\
    y(k) &= Cx(k) + Du(k)
  \end{align}
\end{subequations}
where $x \in \mathbb{R}^n$ is the state, $u \in \mathbb{R}^m$ is the input, $y \in \mathbb{R}^p$ is the output, and $k \in \mathbb{Z}$ is the discrete time index.
Then the transfer function matrix of given system \eqref{sseq} is given by
$  G(z) = C(zI_n-A)^{-1}B + D$.

\begin{definition}[see \cite{Sain}] \label{delayedinverse}
A proper transfer function matrix $\hat{G}(z)$ 
is an {\em $L$-delay inverse system} for given system \eqref{sseq}, where $L$ is a nonnegative integer, if 
  \begin{align} \label{inversetf}
    \hat{G}(z)G(z) = \dfrac{1}{z^L}I_m.
  \end{align}
\end{definition} 

Note that $G(z)$ is an $p\times m$ transfer function matrix and $\hat G(z)$ is a $m \times p$ transfer function matrix. Thus, a necessary condition for \eqref{inversetf} is that $m\leq p$. 

\begin{definition}[see \cite{Sain}] \label{inherentdelay}
  The system \eqref{sseq} is {\em invertible} if it has an $L$-delay inverse for some finite $L$. The least integer $L$ for which an $L$-delay inverse exists will be called the {\em inherent delay} of the invertible system and is denoted by $L_0$.
\end{definition}

For single input and single output systems of \eqref{sseq}, the inherent delay $L_0$ is the relative degree. 

Define, for an integer $t>0$, 
\begin{align}\label{Toeplitz}
\begin{split}
    &\mathcal{T}_{t} \\
    &:= \begin{bmatrix}
    D         & \mymathbb{0} & \mymathbb{0} & \cdots & \mymathbb{0} \\
    CB        & D            & \mymathbb{0} & \cdots & \mymathbb{0} \\
    CAB       & CB           & D            & \cdots & \mymathbb{0} \\
    \vdots    & \vdots       & \vdots       & \ddots & \vdots \\
    CA^{t-1}B & CA^{t-2}B    & CA^{t-3}B    & \cdots & D
  \end{bmatrix}.
\end{split}
\end{align} 
In addition, define $\mathcal{T}_{0} = D$.
Then, the following theorem gives the  necessary and sufficient condition for the existence of an $L$-delay inverse for the system of \eqref{sseq}.

\begin{theorem}[see \cite{Sain}]\label{thm:1}
The following are equivalent:
\begin{enumerate}
\item The system \eqref{sseq} has an $L$-delay inverse.
\item There exists $L$ such that 
  \begin{align} \label{rankcondition}
  {\rm rank}(\mathcal{T}_{L}) - {\rm rank}(\mathcal{T}_{L-1}) = m,
  \end{align}
where ${\rm rank}(\mathcal{T}_{-1}) = 0$.
\item The rank of the matrix $G(z)$ is $m$.
\end{enumerate}
\end{theorem}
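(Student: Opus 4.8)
The plan is to read the Toeplitz matrix $\mathcal{T}_L$ of \eqref{Toeplitz} as the zero-initial-state input--output map over a horizon of length $L+1$: iterating \eqref{sseq} from $x(0)=0$ gives $y_{[0,L]}=\mathcal{T}_L\,u_{[0,L]}$, where $u_{[0,L]}:=[u(0)^{\!\top},\dots,u(L)^{\!\top}]^{\!\top}$ and $y_{[0,L]}$ is defined analogously. The algebraic core of the argument is the equivalence
\begin{equation}\label{eq:starplan}
{\rm rank}(\mathcal{T}_L)-{\rm rank}(\mathcal{T}_{L-1})=m \;\Longleftrightarrow\; \bigl[\,\mathcal{T}_L\delta=0 \Rightarrow \delta_0=0\,\bigr],
\end{equation}
where $\delta=[\delta_0^{\!\top},\dots,\delta_L^{\!\top}]^{\!\top}$ is partitioned conformally. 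To establish \eqref{eq:starplan} I would use the block factorization
\begin{equation*}
\mathcal{T}_L=\begin{bmatrix} D & 0\\ \mathcal{M}_L & \mathcal{T}_{L-1}\end{bmatrix},\qquad \mathcal{M}_L=\begin{bmatrix} CB\\ CAB\\ \vdots\\ CA^{L-1}B\end{bmatrix},
\end{equation*}
which exhibits $\mathcal{T}_{L-1}$ as the trailing block. The column space of $\mathcal{T}_L$ is the sum of ${\rm range}\,[D^{\!\top}\,\mathcal{M}_L^{\!\top}]^{\!\top}$ and ${\rm range}\,[\,0\;\mathcal{T}_{L-1}^{\!\top}]^{\!\top}$, so the rank increment equals $m$ exactly when the first $m$ columns are independent modulo the latter range; unwinding this independence gives precisely the stated null-space condition. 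Since $[\mathcal{T}_L\delta=0\Rightarrow\delta_0=0]$ is in turn equivalent to the solvability of $F\mathcal{T}_L=[\,I_m\;0\;\cdots\;0\,]=:E_0$ for some $m\times(L+1)p$ matrix $F$ (i.e.\ $u(0)$ is a fixed linear function of $y_{[0,L]}$ under zero initial state), condition (2) is the statement that such an $F$ exists. I would then prove the cycle (1)$\Rightarrow$(2)$\Rightarrow$(3)$\Rightarrow$(1).

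For (1)$\Rightarrow$(2), suppose \eqref{sseq} has a proper $L$-delay inverse $\hat{G}(z)$. Driving $\hat{G}$ by the zero-state output $Y(z)=G(z)U(z)$ yields, from \eqref{inversetf}, the reconstruction $W(z)=\hat{G}(z)G(z)U(z)=z^{-L}U(z)$, i.e.\ $w(k)=u(k-L)$; in particular $w(L)=u(0)$. Because $\hat{G}$ is proper its realization is causal, so $w(L)$ is a fixed linear function $\hat{F}\,y_{[0,L]}$ of the outputs seen up to time $L$. Combining with $y_{[0,L]}=\mathcal{T}_L u_{[0,L]}$ (zero initial state) gives $\hat{F}\mathcal{T}_L u_{[0,L]}=u(0)$ for every $u_{[0,L]}$, hence $\hat{F}\mathcal{T}_L=E_0$, and \eqref{eq:starplan} delivers \eqref{rankcondition}. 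For (2)$\Rightarrow$(3) I would first record the one-sided monotonicity ``once $m$, always $m$'': if $\delta\in\ker\mathcal{T}_{L+1}$ then dropping its last block gives an element of $\ker\mathcal{T}_{L}$ (truncating the last input and output preserves a zero zero-state response over the shorter horizon), so the null-space condition at level $L$ propagates to all larger horizons. Assuming, for contradiction, that ${\rm rank}\,G(z)<m$, the columns of $G(z)$ are dependent over $\mathbb{R}(z)$; clearing denominators and shifting in time produces a nonzero finite input sequence whose zero-state output vanishes identically and whose first sample is the (nonzero) top-degree coefficient. For every sufficiently large horizon this is a vector in $\ker\mathcal{T}_{L'}$ with nonzero leading block, contradicting \eqref{eq:starplan} together with the propagation just noted. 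Hence ${\rm rank}\,G(z)=m$, which is (3).

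For (3)$\Rightarrow$(1), full column rank of $G(z)$ over $\mathbb{R}(z)$ yields a rational left inverse: choosing a constant row selection $S\in\mathbb{R}^{m\times p}$ for which $SG(z)$ is nonsingular over $\mathbb{R}(z)$, set $G^{\dagger}(z)=(SG(z))^{-1}S$, so that $G^{\dagger}(z)G(z)=I_m$. This $G^{\dagger}$ may be improper, but for any integer $L$ exceeding the largest pole order at infinity of its entries the matrix $\hat{G}(z):=z^{-L}G^{\dagger}(z)$ is proper and satisfies $\hat{G}(z)G(z)=z^{-L}I_m$, i.e.\ an $L$-delay inverse in the sense of \eqref{inversetf}, establishing (1) and closing the cycle. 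I expect the main obstacle to be the step (2)$\Rightarrow$(3): passing from the finite, horizon-by-horizon rank data to the generic (normal) rank of the rational matrix $G(z)$. The delicate points are the ``once $m$, always $m$'' propagation, so that a single large horizon suffices, and the bookkeeping in the $z$-transform/time-shift correspondence that turns a rational null vector of $G(z)$ into a finite zero-output input with a nonzero leading block. As a byproduct, \eqref{eq:starplan} also furnishes the explicit reconstruction $u(0)=F\,y_{[0,L]}$, which I would reuse later to realize the data-driven inverse and the disturbance observer.
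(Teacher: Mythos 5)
The paper does not prove this theorem at all: it is imported verbatim from Sain and Massey \cite{Sain} (hence the ``see \cite{Sain}'' tag), so there is no in-paper proof to compare against. Your argument is, as far as I can check, a correct and self-contained reconstruction. The pivotal equivalence \eqref{eq:starplan} is right (cleanest via rank--nullity: writing $\mathrm{rank}(\mathcal{T}_L)-\mathrm{rank}(\mathcal{T}_{L-1})=m-\dim\pi(\ker\mathcal{T}_L)$, where $\pi$ extracts $\delta_0$ and $\ker\mathcal{T}_{L-1}$ embeds into $\ker\mathcal{T}_L$ as $\ker\pi$; your column-space phrasing needs both ``first $m$ columns independent'' and ``trivial intersection with $\mathrm{range}[0\;\mathcal{T}_{L-1}^{\top}]^{\top}$'', which your null-space condition does deliver), and its reformulation as solvability of $F\mathcal{T}_L=E_0$ is exactly the Rouch\'e--Capelli step the paper itself performs right after the theorem to build $\mathcal{K}$ --- your byproduct $F$ is the paper's $\mathcal{K}$. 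The three implications are sound: (1)$\Rightarrow$(2) via the zero-state cascade $w(L)=u(0)=\hat F y_{[0,L]}$; (2)$\Rightarrow$(3) via turning a polynomial right annihilator $q(z)$ of $G(z)$ into the finite input $U(z)=z^{-\deg q}q(z)$ with $u(0)=q_0\neq 0$ and identically zero zero-state output (note the ``once $m$, always $m$'' propagation is not even needed here, since truncating that input to $[0,L]$ already gives a kernel vector of $\mathcal{T}_L$ with nonzero leading block); (3)$\Rightarrow$(1) via a row selection $S$ with $SG(z)$ nonsingular and delaying $(SG)^{-1}S$ until proper. The only content of the original reference your proof does not recover is the quantitative part (an a priori bound on the inherent delay $L_0$ in terms of $n$), but the theorem as stated here does not claim it.
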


As a result of Theorem \ref{thm:1}, if the system has $L$-delay inverse, then the first $m$ columns of $\mathcal{T}_L$ are independent.

\section{Data-based Construction of Inverse Systems}

We introduce notations commonly used in the literature for data-driven system and control. 
For a given signal $f(k) \in \mathbb{R}^q$, $k \in \mathbb{Z}$, and $T \in \mathbb{N}$, let $f_{[k,k+T]}$ be defined by
$$f_{[k,k+T]} := \begin{bmatrix}
  f(k) \\
  f(k+1) \\
  \vdots \\
  f(k+T)
\end{bmatrix} \quad \in \mathbb{R}^{q(T+1)}.$$ 
Using the vectors $u_{[k,k+L]}$ and $y_{[k,k+L]}$, the system of \eqref{sseq} can be written as 
\begin{align} \label{iorelation}
  y_{[k,k+L]} = \mathcal{O}_{L} x(k) + \mathcal{T}_{L}u_{[k,k+L]}
\end{align}
where the matrix $\mathcal{O}_{L}$ is defined by 
\begin{align}\label{observability}
    \mathcal{O}_{L} = \begin{bmatrix}
    C^T \ (CA)^T  \  \cdots (CA^{L})^T
  \end{bmatrix}^T.
\end{align}
When the $L$-delay inverse exists for the system of \eqref{sseq}, there exists an $m\times p(L+1)$ matrix $\mathcal{K}$ satisfying 
$$\mathcal{K}\mathcal{T}_{L} = \begin{bmatrix}
  I_m & \mymathbb{0}_{m\times mL}
\end{bmatrix}.$$
Such a matrix $\mathcal{K}$ exists due to Rouch\'{e}-Capelli theorem \cite{RC} with the fact that
$${\rm rank}(\mathcal{T}_L) = {\rm rank}\left(\begin{bmatrix}
\mathcal{T}_L \\
  I_m \  \mymathbb{0}_{m\times mL}
\end{bmatrix}\right)$$
which follows from the fact that the first $m$ columns of $\mathcal{T}_L$ are independent.
Pre-multiplying $\mathcal{K}$ to both sides of \eqref{iorelation} yields 
\begin{align} \label{oirelation}
  u(k) = -\mathcal{K}\mathcal{O}_{L}x(k) + \mathcal{K}y_{[k,k+L]},  
\end{align}
which represents the input $u(k)$ by the state and sequence of outputs. 
State update equation with respect to $y_{[k,k+L]}$ is obtained by substituting \eqref{oirelation} in \eqref{stateupdate}. Then, the following dynamics is obtained:
\begin{subequations} \label{inversesseq}
  \begin{align}
    x(k+1) &= \tilde{A}x(k) + \tilde{B}y_{[k,k+L]} \\
    u(k) &= \tilde{C}x(k) + \tilde{D}y_{[k,k+L]}
  \end{align}
\end{subequations} where $\tilde{A} = A-B\mathcal{K}\mathcal{O}_L$, $\tilde{B} = B\mathcal{K}$, $\tilde{C} = -\mathcal{K}\mathcal{O}_{L}$, and $\tilde{D} = \mathcal{K}$.

In order to provide more specific arguments, we first remind readers of Hankel matrix and the notion of persistent excitation (PE) for a signal $f(k) \in \mathbb{R}^q$. Denote by $\mathcal{H}_t(f_{[i,j]})$ the $qt\times (j-i-t+2)$ dimensional Hankel matrix constructed using the values of $f(i), \cdots, f(j)$: 
\begin{align*}
    &\mathcal{H}_{t}(f_{[i,j]}) \\
    &= \begin{bmatrix}
      f(i) & f(i+1) & \cdots & f(j-t+1) \\
      f(i+1) & f(i+2) & \cdots & f(j-t+2) \\
      \vdots & \vdots & \ddots & \vdots \\
      f(i+t-1) & f(i+t) & \cdots & f(j) \\
    \end{bmatrix}.
\end{align*}
The notion of persistency of excitation of a segment from a signal is defined as follows.

\begin{definition}[see {\cite{Willems}}]
The vector $f_{[0,T-1]} \in \mathbb{R}^{qT}$ from signal $f(k) \in \mathbb{R}^{q}$ 
is {\em persistently exciting of order $t$} if the matrix $\mathcal{H}_t(f_{[0,T-1]}) \in \mathbb{R}^{qt \times (T-t+1)}$ has full row rank.
\end{definition}

As hinted by \eqref{inversesseq}, data-based construction of inverse dynamics may involve input and output sequences of different length. 

\begin{lemma}\label{lem:1}
A pair of vectors $u_{[0,T-1]} \in \mathbb{R}^{mT}$ and $y_{[0,T-1+L]} \in \mathbb{R}^{p(T+L)}$ is a trajectory of the system \eqref{sseq} if and only if there exists $u_{[T,T-1+L]}$ such that the pair 
$u_{[0,T-1+L]}$ and $y_{[0,T-1+L]}$ is a length $T+L$ trajectory of the system \eqref{sseq}. 
\end{lemma}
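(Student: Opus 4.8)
The plan is to prove the two implications separately, taking the standard state-space characterization as the working definition: a matched pair $(u_{[0,N-1]}, y_{[0,N-1]})$ is a length-$N$ trajectory of \eqref{sseq} exactly when there is an initial state $x(0)$ for which the state and output equations of \eqref{sseq} hold at every $k \in \{0,\dots,N-1\}$, equivalently when $y_{[0,N-1]} = \mathcal{O}_{N-1} x(0) + \mathcal{T}_{N-1} u_{[0,N-1]}$. The length-mismatched pair is then a trajectory precisely when its samples are jointly realizable by the dynamics, and the lemma asserts that this realizability coincides with being completable to a genuine length-$(T+L)$ trajectory.

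The easy direction is ($\Leftarrow$). If some extension $u_{[T,T-1+L]}$ makes $(u_{[0,T-1+L]}, y_{[0,T-1+L]})$ a length-$(T+L)$ trajectory, then a single initial state $x(0)$ realizes the whole signal; discarding the last $L$ input samples leaves $x(0)$, the inputs $u_{[0,T-1]}$, and the full output $y_{[0,T-1+L]}$ still consistent with \eqref{sseq}, so the mismatched pair is a trajectory. No computation is needed here beyond restricting the input window.

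For ($\Rightarrow$), I would first invoke the trajectory assumption to fix an initial state $x(0)$ and propagate it through \eqref{stateupdate} with the known inputs to obtain $x(T) = A^T x(0) + \sum_{j=0}^{T-1} A^{T-1-j} B u(j)$. The remaining outputs are governed by the same input--output relation as \eqref{iorelation}, shifted to time $T$ over the horizon of the trailing $L$ samples, namely $y_{[T,T-1+L]} = \mathcal{O}_{L-1} x(T) + \mathcal{T}_{L-1} u_{[T,T-1+L]}$. The task then reduces to producing an input block $u_{[T,T-1+L]} \in \mathbb{R}^{mL}$ solving this linear system; concatenating it with $u_{[0,T-1]}$ yields an extension for which the original $x(0)$ realizes the entire length-$(T+L)$ signal, as required.

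The crux is therefore the solvability of $y_{[T,T-1+L]} - \mathcal{O}_{L-1} x(T) = \mathcal{T}_{L-1} u_{[T,T-1+L]}$, i.e. showing that the shifted output residual lies in the column space of $\mathcal{T}_{L-1}$. This is exactly where the hypothesis that the mismatched pair is a trajectory must be spent: because those output samples are assumed jointly realizable by \eqref{sseq} from the propagated state $x(T)$, the residual is by construction in the range of the input-to-output map encoded by $\mathcal{T}_{L-1}$, and any such solution serves as the desired extension. I expect this consistency step to be the only real obstacle; once the meaning of ``trajectory'' for a length-mismatched pair is fixed, both implications follow from state propagation and the relation \eqref{iorelation}, with neither persistent excitation nor $L$-delay invertibility needed at this stage.
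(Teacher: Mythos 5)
The paper states Lemma~\ref{lem:1} without any proof, so there is no argument of the authors' to compare yours against; what follows is an assessment of your proposal on its own terms.

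Your backward direction is fine and needs nothing beyond restricting the input window, as you say. The problem is the forward direction, and it sits exactly at the step you yourself flag as the crux. You reduce the claim to the solvability of
\[
y_{[T,T-1+L]} - \mathcal{O}_{L-1}\, x(T) \;=\; \mathcal{T}_{L-1}\, u_{[T,T-1+L]},
\]
and then argue that the residual lies in the range of $\mathcal{T}_{L-1}$ ``because those output samples are assumed jointly realizable by \eqref{sseq} from the propagated state $x(T)$.'' But realizability of $y(T),\dots,y(T+L-1)$ from the state $x(T)$ means, by definition, that there exist inputs $u(T),\dots,u(T+L-1)$ producing them --- which is precisely the extension whose existence you are trying to establish. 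The argument is circular. The underlying difficulty is that the paper never says what it means for a length-mismatched pair (input of length $T$, output of length $T+L$) to be ``a trajectory'': the last $L$ output samples depend on input values that are not part of the given data, so the only workable definition is the existence of a completing input block --- i.e., the right-hand side of the equivalence itself. Under that reading the lemma is definitional and requires no proof (presumably why the authors omit one); under any other reading you would have to state the alternative definition explicitly and then actually verify the range condition for the displayed linear system, rather than asserting it. As written, your forward direction either proves a tautology or assumes its conclusion. Your closing observation that neither persistent excitation nor $L$-delay invertibility is needed here is correct.
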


Now, for some $T>0$ and $L\ge 0$, denote the input and output data collected from an experiment for the system \eqref{sseq}
by $u^d_{[0,T+L-1]}$ and $y^d_{[0,T+L-1]}$, respectively. 

\begin{lemma}\label{lem:2}
 Let the system \eqref{sseq} be controllable. Assume that $u^d_{[0,T-1]}$ is persistently exciting of order $n+t+L$. 
 Then, for any $t_0$, a pair of a length $t$ input $u_{[t_0, t_0+t-1]}$ and a length $t+L$ output $y_{[t_0, t_0+t+L-1]}$ is a trajectory of \eqref{sseq} if and only if there is $g\in\mathbb{R}^{T-t+1}$ such that 
    \begin{align}\label{L2}
        \begin{bmatrix}
            u_{[t_0,t_0+t-1]} \\
            y_{[t_0,t_0+t+L-1]} 
        \end{bmatrix} = \begin{bmatrix}
            \mathcal{H}_{t}(u^d_{[0,T-1]}) \\
            \mathcal{H}_{t+L}(y^d_{[0,T+L-1]})
        \end{bmatrix}g.
    \end{align}
\end{lemma}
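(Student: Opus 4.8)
The plan is to reduce this mixed-length statement to the classical fundamental lemma \cite{MWH06} by combining two ingredients: a structural identity among the Hankel matrices and the length-matching provided by Lemma~\ref{lem:1}. First I would record that the stacked matrix appearing in \eqref{L2} is a \emph{row restriction} of the full-window Hankel matrix
$$M := \begin{bmatrix} \mathcal{H}_{t+L}(u^d_{[0,T+L-1]}) \\ \mathcal{H}_{t+L}(y^d_{[0,T+L-1]}) \end{bmatrix}.$$
Reading off columns shows that the first $mt$ rows of $\mathcal{H}_{t+L}(u^d_{[0,T+L-1]})$ coincide with $\mathcal{H}_{t}(u^d_{[0,T-1]})$, since those rows involve only the samples $u^d(0),\dots,u^d(T-1)$. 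Hence the matrix in \eqref{L2} is obtained from $M$ by deleting the last $mL$ rows (the final $L$ input block rows), and both matrices share the same $T-t+1$ columns.

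Next I would verify that the classical fundamental lemma applies to $M$ with window length $t+L$. The hypothesis that $u^d_{[0,T-1]}$ is PE of order $n+t+L$ means $\mathcal{H}_{n+t+L}(u^d_{[0,T-1]})$ has full row rank; appending the extra columns built from $u^d(T),\dots,u^d(T+L-1)$ cannot destroy full row rank, so $u^d_{[0,T+L-1]}$ is also PE of order $n+t+L$. Together with controllability, the fundamental lemma \cite{MWH06} then states that a length-$(t+L)$ pair is a trajectory of \eqref{sseq} if and only if it lies in the column span of $M$.

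The equivalence is then assembled as follows. By Lemma~\ref{lem:1}, the mixed-length pair $(u_{[t_0,t_0+t-1]}, y_{[t_0,t_0+t+L-1]})$ is a trajectory if and only if there is an input extension $u_{[t_0+t,t_0+t+L-1]}$ for which the completed length-$(t+L)$ pair is a trajectory, equivalently (by the previous paragraph) for which there is $g$ with $Mg$ equal to the stacked completed pair. Deleting the $mL$ rows that carry the free extension produces exactly \eqref{L2}: given a solution $g$ of \eqref{L2}, the deleted rows $\mathcal{H}_{L}(u^d_{[t,T+L-1]})g$ supply an admissible extension, while any full solution restricts to a solution of \eqref{L2}. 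Thus existence of $g$ in \eqref{L2} is equivalent to existence of a pair (extension,\,$g$) for $M$, which by Lemma~\ref{lem:1} and the fundamental lemma is equivalent to the mixed-length pair being a trajectory.

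The main obstacle I anticipate is the bookkeeping that makes the row-restriction argument an exact equivalence rather than a one-sided implication: one must check that the dropped rows are \emph{unconstrained} in \eqref{L2} and that the values they take under any solution $g$ are precisely a valid input extension in the sense of Lemma~\ref{lem:1}. This identification of the free extension with the deleted input block rows is the only place where the particular block structure of the Toeplitz/Hankel construction is genuinely used.
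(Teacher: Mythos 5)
Your proposal is correct and follows essentially the same route as the paper's proof: reduce the mixed-length pair to a full length-$(t+L)$ trajectory via Lemma~\ref{lem:1}, apply the fundamental lemma to the full-window Hankel matrix, and pass between the full system and \eqref{L2} by deleting/reconstructing the last $mL$ input rows, with the reconstructed extension given by exactly the same formula $\mathcal{H}_{L}(u^d_{[t,T+L-1]})g$ that the paper uses. Your explicit check that persistency of excitation of $u^d_{[0,T-1]}$ carries over to the extended data $u^d_{[0,T+L-1]}$ is a detail the paper leaves implicit.
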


\begin{proof}
Let $u_{[t_0, t_0+t-1]}$ and $y_{[t_0,t_0+t+L-1]}$ be a trajectory of \eqref{sseq}. 
Then, by Lemma 1, there exists $u_{[t_0+t,t_0+t+L-1]}$ such that the pair $u_{[t_0, t_0+t+L-1]}$ and $y_{[t_0,t_0+t+L-1]}$ is a trajectory of \eqref{sseq}. 
Then, by \cite[Theorem 1]{Waarde} there exists $g$ such that
\begin{align}\label{BA}
    \begin{bmatrix}
        u_{[t_0,t_0+t+L-1]} \\
        y_{[t_0,t_0+t+L-1]} 
    \end{bmatrix} = \begin{bmatrix}
        \mathcal{H}_{t+L}(u^d_{[0,T+L-1]}) \\
        \mathcal{H}_{t+L}(y^d_{[0,T+L-1]}) 
    \end{bmatrix}g.
\end{align}
Hence \eqref{L2} is satisfied. 
Conversely, let $u_{[t_0, t_0+t-1]}$ and $y_{[t_0,t_0+t+L-1]}$ satisfy \eqref{L2} for some $g$. Set the subsequent sequence of $u$ as 
\begin{equation}
u_{[t_0+t,t_0+t+L-1]}= \mathcal{H}_{L}(u^d_{[t,T+L-1]})g.
\end{equation}  
Then, $u_{[t_0, t_0+t+L-1]}$ and $y_{[t_0,t_0+t+L-1]}$ satisfy \eqref{BA}. 
Again due to \cite[Theorem 1]{Waarde} the pair is the trajectory of \eqref{sseq}, and due to Lemma \ref{lem:1}, the pair $u_{[t_0, t_0+t-1]}$ and $y_{[t_0,t_0+t+L-1]}$ is a trajectory of \eqref{sseq}.
\end{proof}

Denote by $U_p$, $U_f$, $Y_p$ and $Y_{f_L}$ the following four Hankel matrices formed by the data $u^d_{[0,T+L-1]}$, $y^d_{[0,T+L-1]}$ for some $T_p>0$ and $T_f>0$ satisfying $T>T_p+T_f-1$  
\begin{align*}
    U_p &= \mathcal{H}_{T_p}(u^d_{[0,T-T_f-1]}),  \quad
    U_f = \mathcal{H}_{T_f}(u^d_{[T_p,T-1]}), \\ 
    Y_p &= \mathcal{H}_{T_p}(y^d_{[0,T-T_f-1]}),  \quad
    Y_{f_L} = \mathcal{H}_{T_f+L}(y^d_{[T_p,T+L-1]}).
\end{align*}

\begin{lemma}\label{lem:3}
Let  the system of \eqref{sseq} be observable and have an $L$-delay inverse with $L\ge0$. Assume that $T_p$ be greater than or equal to the observability index of \eqref{sseq}. 
Then,
\begin{equation}\label{eq:equalnull}
\mathcal{N}\left(\begin{bmatrix}
      U_p \\ 
      Y_p \\ 
      U_f \\ 
      Y_{f_L} 
    \end{bmatrix}\right) = \mathcal{N}\left(\begin{bmatrix}
      U_p \\ 
      Y_p \\ 
      Y_{f_L} 
    \end{bmatrix}\right)
\end{equation}
where $\mathcal{N}(A)$ is the null space of matrix $A$.
\end{lemma}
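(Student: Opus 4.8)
The plan is to prove the two inclusions of \eqref{eq:equalnull} separately. One inclusion is immediate: the matrix $[U_p;\,Y_p;\,Y_{f_L}]$ is obtained from $[U_p;\,Y_p;\,U_f;\,Y_{f_L}]$ by deleting the block row $U_f$, so every $g$ annihilating all four blocks certainly annihilates the three retained ones, giving $\mathcal{N}([U_p;\,Y_p;\,U_f;\,Y_{f_L}]) \subseteq \mathcal{N}([U_p;\,Y_p;\,Y_{f_L}])$. The entire content of the lemma is therefore the reverse inclusion, and the whole argument reduces to showing: if $g$ satisfies $U_p g = 0$, $Y_p g = 0$, and $Y_{f_L} g = 0$, then necessarily $U_f g = 0$.

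First I would note that, by construction, the $i$-th column of the stacked matrix $[U_p;\,Y_p;\,U_f;\,Y_{f_L}]$ consists of the input $u^d_{[i,\,i+T_p+T_f-1]}$ together with the output $y^d_{[i,\,i+T_p+T_f+L-1]}$ cut from the recorded data, i.e.\ a length-$(T_p+T_f)$ input with a length-$(T_p+T_f+L)$ output; hence each column is a trajectory of \eqref{sseq} in the sense of Lemma \ref{lem:2}. Since \eqref{sseq} is linear, the set of such truncated trajectories is a subspace, so the combination $[U_p;\,Y_p;\,U_f;\,Y_{f_L}]\,g$ is again a trajectory. For $g$ in the null space of the three-block matrix, this trajectory has zero past input and output over the first $T_p$ steps, future input equal to $U_f g$ over the next $T_f$ steps, and zero future output over the whole block of $T_f+L$ steps. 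Writing $\tau$ for the time at which the future window begins, there is thus a solution of \eqref{sseq} with $x$ and $u$ vanishing on $[\tau-T_p,\tau-1]$, output vanishing on $[\tau,\tau+T_f+L-1]$, and input $U_f g$ on $[\tau,\tau+T_f-1]$.

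The next step uses observability to pin the state, and then $L$-delay invertibility to recover the input. Because $T_p$ is at least the observability index, $\mathcal{O}_{T_p-1}$ has full column rank; applying \eqref{iorelation} with horizon $T_p-1$ in place of $L$ over the past window, where both input and output are zero, gives $\mathcal{O}_{T_p-1}\,x(\tau-T_p)=0$, hence $x(\tau-T_p)=0$, and propagating forward through $T_p$ steps of zero input yields $x(\tau)=0$. I would then argue by induction on $j=0,1,\dots,T_f-1$: assuming $x(\tau+j)=0$, relation \eqref{oirelation} with $x(\tau+j)=0$ and $y_{[\tau+j,\tau+j+L]}=0$ gives $u(\tau+j)=0$, after which $x(\tau+j+1)=Ax(\tau+j)+Bu(\tau+j)=0$ closes the step. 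This forces every block of $U_f g$ to vanish, i.e.\ $U_f g=0$, completing the reverse inclusion.

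The step I expect to require the most care is the bookkeeping in this induction: at each stage the output window $[\tau+j,\tau+j+L]$ must lie inside $[\tau,\tau+T_f+L-1]$, the range on which the output is known to be zero. The worst case $j=T_f-1$ needs the output up to $\tau+T_f+L-1$, which is exactly the last available entry, so the argument closes with no slack — precisely why $Y_{f_L}$ is built with depth $T_f+L$ rather than $T_f$. A secondary point to verify is that the inputs $u(\tau+j),\dots,u(\tau+j+L)$ entering the window exist as part of the trajectory (by Lemma \ref{lem:1} the truncated trajectory extends to a full one), even though only the leading block $u(\tau+j)$ belongs to $U_f g$; since the conclusion concerns only that leading block, the extension beyond $\tau+T_f-1$ is harmless. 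I would emphasize that this argument uses only linearity of the behavior together with observability and $L$-delay invertibility, and does not require persistency of excitation.
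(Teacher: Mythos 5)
Your argument is correct and rests on the same two ingredients as the paper's own proof: the observability-index condition on $T_p$ to pin down the state from the zero past window, and the $L$-delay-inverse relation \eqref{oirelation} to recover $u(k)$ from $x(k)$ and $y_{[k,k+L]}$. The only difference is packaging --- the paper assembles these into an explicit linear dependence of $U_f$ on $U_p$, $Y_p$, $Y_{f_L}$ via the inverse realization \eqref{inversesseq}, whereas you run the same recursion on an arbitrary element of the null space of the three-block matrix; both yield \eqref{eq:equalnull}.
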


\begin{proof}
Denote $[x(T_p), x(T_p+1), \cdots, x(T-T_f)]$ by $X$, and define $v(k)=y^d_{[k,k+L]}$ for the simplicity of derivation. 
First, we show that $U_f$ is determined by $X$ and $\mathcal{H}_{T_f}(v_{[T_p,T-1]})$ by \eqref{oirelation} and \eqref{inversesseq} as
\begin{align*}
    U_f &= \begin{bmatrix}
      \tilde{C} \\
      \tilde{C} \tilde{A} \\
      \vdots \\
      \tilde{C} \tilde{A}^{T_f -1}
    \end{bmatrix} X \\
    &+ \begin{bmatrix}
      \tilde{D} & \mymathbb{0}_{m\times p(L+1)} & \cdots & \mymathbb{0}_{m\times p(L+1)} \\
      \tilde{C} \tilde{B} & \tilde{D} & \cdots & \mymathbb{0}_{m\times p(L+1)} \\
      \vdots & \ddots & \ddots & \vdots \\
      \tilde{C} \tilde{A}^{T_f - 2} \tilde{B} & \tilde{C} \tilde{A}^{T_f - 3} \tilde{B} & \cdots & \tilde{D}
    \end{bmatrix} \\
    &\quad \times \mathcal{H}_{T_f}(v_{[T_p,T-1]}).
\end{align*}
Again using \eqref{Toeplitz}, \eqref{observability}, and \eqref{inversesseq}, $X$ is determined by $U_p$ and $Y_p$ as 
\begin{align}\label{X}
  X = A^{T_p}\mathcal{O}_{T_p - 1}^\dagger(Y_p-\mathcal{T}_{T_p - 1}U_p) + \mathcal{C}_{T_p-1}U_p
\end{align} 
where 
\begin{align*}
\mathcal{C}_{T_p-1} = \begin{bmatrix}
  A^{T_p-1}B & A^{T_p-2}B & \cdots & AB & B
\end{bmatrix}
\end{align*}
and $\dagger$ implies the left-inverse of a matrix.
Finally, let $M_i \in \mathbb{R}^{p(L+1)\times p(T_f+L)}$ be a matrix that include the identity matrix of size $p(L+1)$ from its $p(i-1)+1$-th column and all other elements are zero.
Then, it is easily seen that 
\begin{align}\label{Hv}
    \mathcal{H}_{T_f}(v_{[T_p,T-1]})= M Y_{f_L}
\end{align}    
where $M^T = [M_1^T, \cdots, M_{T_f}^T]$.
Therefore, $\mathcal{H}_{T_f}(v_{[T_p,T-1]})$ is determined by $Y_{f_L}$.
Combining \eqref{X} and \eqref{Hv}, we arrive that $U_f$ linearly depends on $U_p$, $Y_p$ and $Y_{f_L}$. 
Therefore, \eqref{eq:equalnull} follows.
\end{proof}
Now we state the main result of this paper. 
\begin{theorem}\label{thm:2}
Let the system of \eqref{sseq} be controllable, observable and have an $L$-delay inverse with $L \ge 0$. 
Assume that $T_p$ be greater than or equal to the observability index of \eqref{sseq} and $u^d_{[0,T-1]}$ is PE of order $n+T_p+T_f+L$. 
Then, for any trajectory $u_{[t_0,t_0+T_p-1]}$ and $y_{[t_0,t_0+T_p+T_f+L-1]}$ of \eqref{sseq} where $t_0$ is arbitrary, the input sequence  $u_{[t_0+T_p,t_0+T_p+T_f-1]}$ is uniquely determined by
\begin{equation}\label{eq:found}
    u_{[t_0+T_p,t_0+T_p+T_f-1]}= U_f g,
\end{equation}
where $g$ is a solution of 
\begin{equation}\label{ExtendedHankel}
    \begin{bmatrix}
      U_p \\
      Y_p \\ 
      Y_{f_L}
    \end{bmatrix} g = \begin{bmatrix}
      u_{[t_0,t_0+T_p-1]} \\ 
      y_{[t_0,t_0+T_p-1]} \\ 
      y_{[t_0+T_p,t_0+T_p+T_f+L-1]} 
    \end{bmatrix}.
\end{equation}
\end{theorem}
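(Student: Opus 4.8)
The plan is to reduce Theorem \ref{thm:2} to a single application of Lemma \ref{lem:2} together with the null-space identity of Lemma \ref{lem:3}, the only subtlety being that equation \eqref{ExtendedHankel} need not determine $g$ uniquely. First I would record the block structure of the data Hankel matrices. Taking $t = T_p + T_f$ and comparing the signal indices entry by entry shows that the stacked Hankel matrices appearing in Lemma \ref{lem:2} decompose as
\begin{align*}
\mathcal{H}_{T_p+T_f}(u^d_{[0,T-1]}) = \begin{bmatrix} U_p \\ U_f \end{bmatrix}, \quad
\mathcal{H}_{T_p+T_f+L}(y^d_{[0,T+L-1]}) = \begin{bmatrix} Y_p \\ Y_{f_L} \end{bmatrix},
\end{align*}
since $U_p,Y_p$ collect the first $T_p$ block rows while $U_f,Y_{f_L}$ collect the remaining $T_f$ and $T_f+L$ block rows, and all four matrices share the same $T-T_p-T_f+1$ columns.

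Next I would invoke Lemma \ref{lem:2} with $t = T_p+T_f$; the hypothesis that $u^d_{[0,T-1]}$ is PE of order $n+T_p+T_f+L$ is exactly what that lemma demands for this $t$. Because $u_{[t_0,t_0+T_p+T_f-1]}$ and $y_{[t_0,t_0+T_p+T_f+L-1]}$ form a trajectory of \eqref{sseq}, the lemma supplies a $g$ with
\begin{align*}
\begin{bmatrix} u_{[t_0,t_0+T_p-1]} \\ y_{[t_0,t_0+T_p-1]} \\ u_{[t_0+T_p,t_0+T_p+T_f-1]} \\ y_{[t_0+T_p,t_0+T_p+T_f+L-1]} \end{bmatrix} = \begin{bmatrix} U_p \\ Y_p \\ U_f \\ Y_{f_L} \end{bmatrix} g,
\end{align*}
after reordering rows according to the block structure above. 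Reading off the three blocks other than $U_f$ shows that this $g$ solves \eqref{ExtendedHankel}, so \eqref{ExtendedHankel} is consistent; reading off the $U_f$ block shows that for this particular $g$ the formula \eqref{eq:found} returns the true future input $u_{[t_0+T_p,t_0+T_p+T_f-1]}$. This settles existence and correctness for one solution.

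The main obstacle is that \eqref{ExtendedHankel} generically has a nontrivial solution set, so I must show that $U_f g$ is identical for every solution; this is precisely where Lemma \ref{lem:3} enters. If $g_1$ and $g_2$ both satisfy \eqref{ExtendedHankel}, then $g_1 - g_2 \in \mathcal{N}([U_p^T\ Y_p^T\ Y_{f_L}^T]^T)$. By Lemma \ref{lem:3} this null space coincides with $\mathcal{N}([U_p^T\ Y_p^T\ U_f^T\ Y_{f_L}^T]^T)$, whence $g_1-g_2$ also annihilates $U_f$ and $U_f g_1 = U_f g_2$. Thus $U_f g$ is well defined independently of the chosen solution, and combining this with the preceding paragraph shows that \eqref{eq:found} recovers $u_{[t_0+T_p,t_0+T_p+T_f-1]}$ for every solution $g$ of \eqref{ExtendedHankel}. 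I would close by remarking that controllability and observability are consumed inside Lemmas \ref{lem:2} and \ref{lem:3}, so no further structural hypotheses are invoked here.
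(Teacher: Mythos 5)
Your proposal is correct and follows essentially the same route as the paper's proof: establish consistency of \eqref{ExtendedHankel} by appending the true future input $u^*$ and applying Lemmas \ref{lem:1} and \ref{lem:2}, then use the null-space identity of Lemma \ref{lem:3} to conclude that $U_f g$ is the same for all solutions $g$. The only (harmless) differences are that you make the block decomposition of the Hankel matrices explicit and conclude directly that $U_f g$ equals the true input, whereas the paper closes by re-verifying via Lemma \ref{lem:2} that the reconstructed pair is a trajectory.
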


\begin{proof}
We first claim that \eqref{ExtendedHankel} always has a solution. 
Since $u_{[t_0,t_0+T_p-1]}$ and $y_{[t_0,t_0+T_p+T_f+L-1]}$ are (different length) trajectory of \eqref{sseq}, there exists a $T_f$ long sequence $u^*$ that satisfies
\begin{equation}
    \begin{bmatrix}
      U_p \\
      Y_p \\ 
      U_f \\
      Y_{f_L}
    \end{bmatrix} g = \begin{bmatrix}
      u_{[t_0,t_0+T_p-1]} \\ 
      y_{[t_0,t_0+T_p-1]} \\ 
      u^* \\
      y_{[t_0+T_p,t_0+T_p+T_f+L-1]} 
    \end{bmatrix}
\end{equation}
by Lemmas \ref{lem:1} and \ref{lem:2}, which proves the first claim.

We note that, for any solution $g$ to \eqref{ExtendedHankel}, the vector $U_f g$ is unique.
Indeed, if $g_1$ and $g_2$ are two solutions of \eqref{ExtendedHankel}, then $g_1-g_2 \in \mathcal{N}([U_p^T,Y_p^T,Y_{f_L}^T]^T)$.
This in turn implies that $g_1 - g_2 \in \mathcal{N}(U_f)$ by \eqref{eq:equalnull}, so that $U_f(g_1-g_2)=0$.

Therefore, set $u_{[t_0+T_p, t_0+T_p+T_f-1]} = U_fg$ for any solution $g$ of \eqref{ExtendedHankel}. 
Then, it holds that 
\begin{equation}
    \begin{bmatrix}
      U_p \\
      Y_p \\ 
      U_f \\
      Y_{f_L}
    \end{bmatrix} g = \begin{bmatrix}
      u_{[t_0,t_0+T_p-1]} \\ 
      y_{[t_0,t_0+T_p-1]} \\ 
      u_{[t_0+T_p, t_0+T_p+T_f-1]} \\
      y_{[t_0+T_p,t_0+T_p+T_f+L-1]} 
    \end{bmatrix},
\end{equation}
which implies that the pair $u_{[t_0, t_0+T_p+T_f-1]}$ and $y_{[t_0, t_0+T_0+T_f+L-1]}$ are trajectories of \eqref{sseq} by Lemma~\ref{lem:2}. 
\end{proof}

\noindent Theorem \ref{thm:2} provides a foundation to answer the question posed in Introduction, i.e.,  determining the input that generated a given output using previously collected data, since it states that 
$T_f$ long input $u_{[t_0+T_p,t_0+T_p+T_f-1]}$ is uniquely determined for any pair $u_{[t_0,t_0+T_p-1]}$ and $y_{[t_0,t_0+T_p+T_f+L-1]}$. Figure \ref{fig:estimation} illustrates this with color coded signals when $T_p=4$, $T_f=3$ and $L=2$. Signals in black are required to determine the signal in red. 

\begin{figure}[!ht]
  \centering 
  \includegraphics[width=0.8\linewidth]{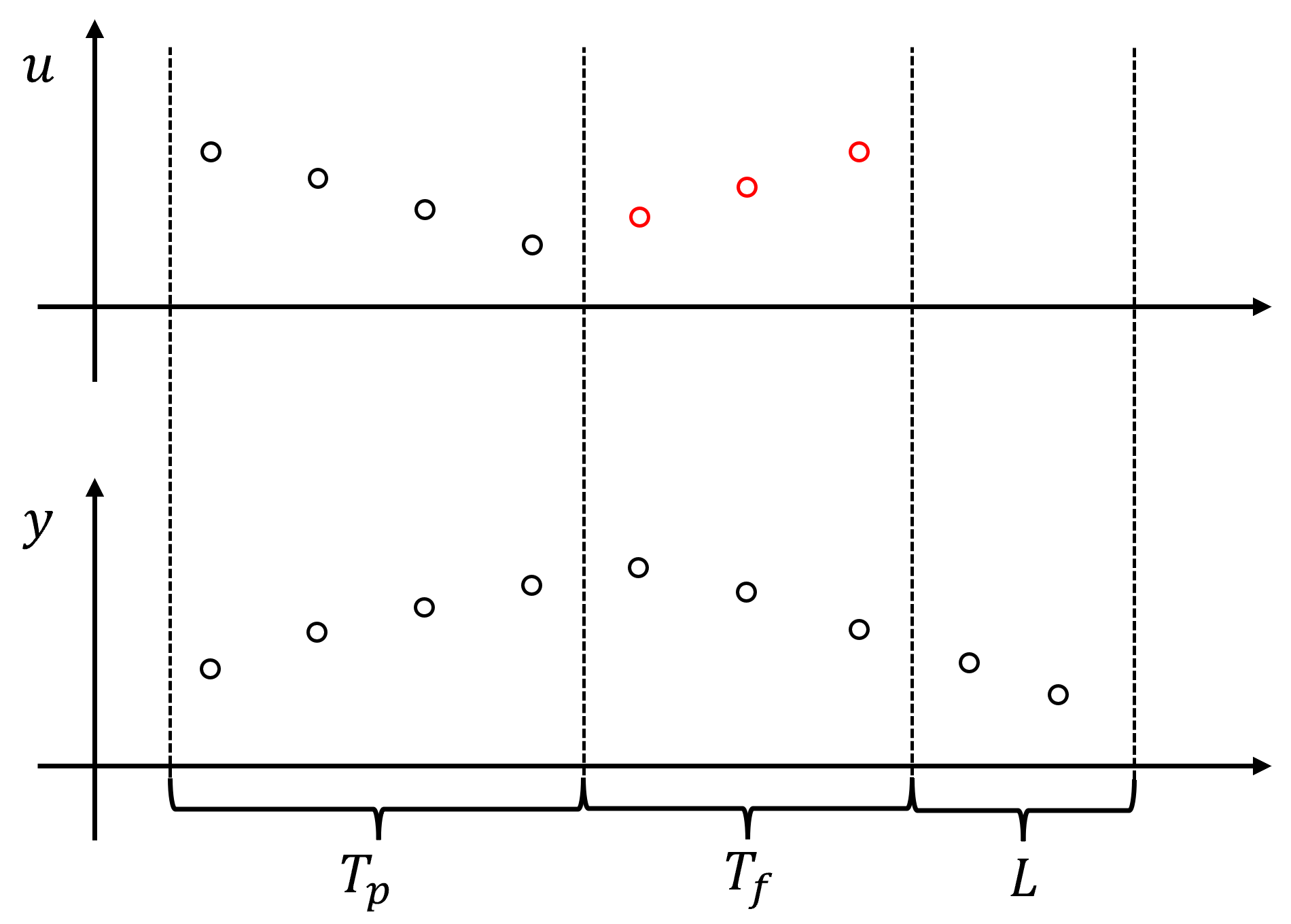}
  \caption{Graphical illustration of sequences in Theorem \ref{thm:2}}
  \label{fig:estimation}
\end{figure}

The necessity of $T_p$ long pair of previous $u$ and $y$ is for the system state $x(t_0+T_p)$ identification. From this perspective, an interpretation of the theorem is given as follows. Any vector $g$ belonging to
$$\mathcal{G} := \left\{ g : \begin{bmatrix} U_p \\ Y_p \end{bmatrix} g = \begin{bmatrix} u_{[t_0,t_0+T_p-1]} \\ y_{[t_0,t_0+T_p-1]} \end{bmatrix} \right\}$$
contains the information of the state $x(t_0+T_p)$, so that the set
$$\left\{ \begin{bmatrix} U_f \\ Y_{f_L} \end{bmatrix} g : g \in \mathcal{G} \right\}$$
is a collection of all possible $T_f$-long input and $(T_f+L)$-long output sequences starting from $x(t_0+T_p)$.
Thus, any observed $y_{[t_0+T_p,t_0+T_p+T_f+L-1]}$  must belong to this set, i.e., there exists $g\in\mathcal{G}$ and 
$$ y_{[t_0+T_p,t_0+T_p+T_f+L-1]} = Y_{f_L} g. $$
This $g$ is identified by solving \eqref{ExtendedHankel}. The corresponding  $u_{[t_0+T_p,t_0+T_p+T_f-1]}$ is obviously given as in \eqref{eq:found} and unique.

The result of Theorem \ref{thm:2} is written in the form of algorithm. See Algorithm 1. 
\begin{algorithm}[ht] 
  \begin{algorithmic}[1] 
    \caption{Input Estimation}
    \STATE {\bf{Setup: }} Set $T_p, T_f, L \in \mathbb{Z}$ with $T_p \geq $ (observability index of the system). 
    Make Hankel matrices $U_p$, $U_f$, $Y_p$, $Y_{f_L}$ from persistently exciting input of order $n+T_p+T_f+L$.
    Get $u_{[-T_p,-1]}$ and $y_{[-T_p,-1]}$.
    Set $k=0$.
    \STATE {\bf{Input: }} $y_{[k,k+T_f+L-1]}$ 
    \STATE Solve $g$ for 
    \begin{align*}
        \begin{bmatrix}
          Y_p \\
          U_p \\
          Y_{f_L} 
        \end{bmatrix}g = \begin{bmatrix}
          y_{[k-T_p,k-1]} \\
          u_{[k-T_p,k-1]} \\
          y_{[k,k+T_f+L-1]}
        \end{bmatrix}.
    \end{align*}
    \STATE Set $u_{[k,k+T_f-1]} = U_f g$.
    \STATE {\bf{Output: }} $\hat u = u_{[k,k+T_f-1]}$
    \STATE Update $k \leftarrow k+T_f$
    \STATE Repeat from 2
\end{algorithmic}
\end{algorithm}

To initiate Algorithm 1, $u_{[-T_p,-1]}$ and $y_{[-T_p,-1]}$ must be obtained. If the system is initially at rest, both can be set to zeros. An example is provided. 

\begin{example}\label{exam:1}
As an example, we consider the system of \eqref{sseq} with 
\begin{subequations}\label{example1}
\begin{align}
    A &= \begin{bmatrix}
        -0.3 & 0 \\
        0 & -0.5
    \end{bmatrix}, \quad B=\begin{bmatrix}
        2 & 1 \\
        -1 & 1
    \end{bmatrix}, \\
    C&= \begin{bmatrix}
        1 & 2 \\
        1 & 0
    \end{bmatrix}, \quad D= \begin{bmatrix}
        0 & 0 \\
        0 & 0
    \end{bmatrix}.
\end{align}
\end{subequations}
The system has $L$-inverse with $L=1$. Applying  $u_{[0,99]}$ as shown in Figure \ref{fig:eg1_io}(a) to the system with $x(0)=0$ yields  $y_{[0,99]}$ as shown in Figure \ref{fig:eg1_io}(b). 

\begin{figure}[!ht]
     \centering
     \begin{subfigure}[ht]{0.45\linewidth}
         \centering
         \includegraphics[width=\linewidth]{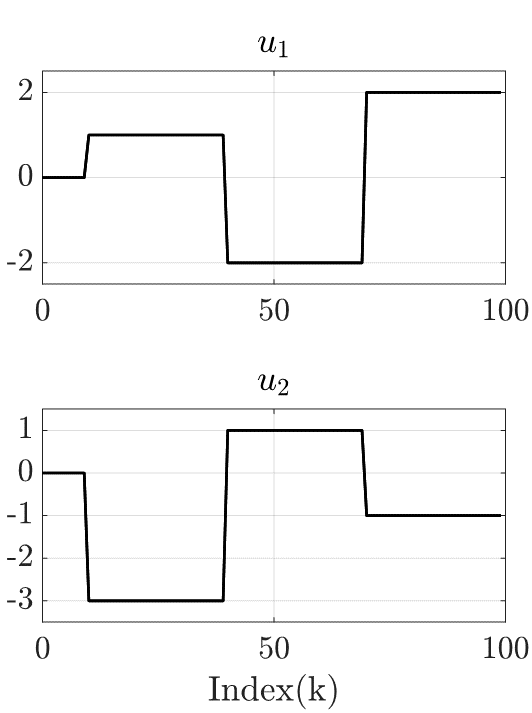}
         \caption{Inputs}
         \label{fig:eg1_input}
     \end{subfigure} \hfill
     \begin{subfigure}[ht]{0.45\linewidth}
         \centering
         \includegraphics[width=\linewidth]{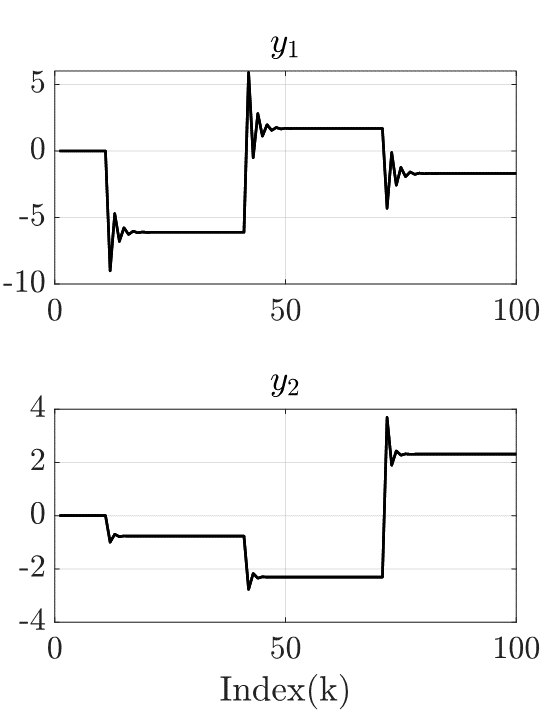}
         \caption{Outputs}
         \label{fig:eg1_output}
     \end{subfigure}
     \caption{Inputs and outputs}
     \label{fig:eg1_io}
\end{figure}

To apply Algorithm 1, we set $T_p=2$, $T_f=3$, and $L=1$, and make Hankel matrices with data generated by a 30 long random input sequence satisfying PE of order $n+T_p+T_f+L = 8$. 
Assuming the system initially at rest, Algorithm 1 begins iteration at time $k=3$ with $u_{[0,1]}=0_{4 \times 1}$, $y_{[0,1]}=0_{4 \times 1}$ and the first input to the algorithm is $y_{[2,5]}$.
Denoting the outcome of Algorithm by $\hat u(k)$ at time $k$, we obtain $\hat u_{[2,97]}$, which is shown in Figure \ref{fig:eg1_input_est} in red. 
\begin{figure}[!ht]
  \centering 
  \includegraphics[width=0.9\linewidth]{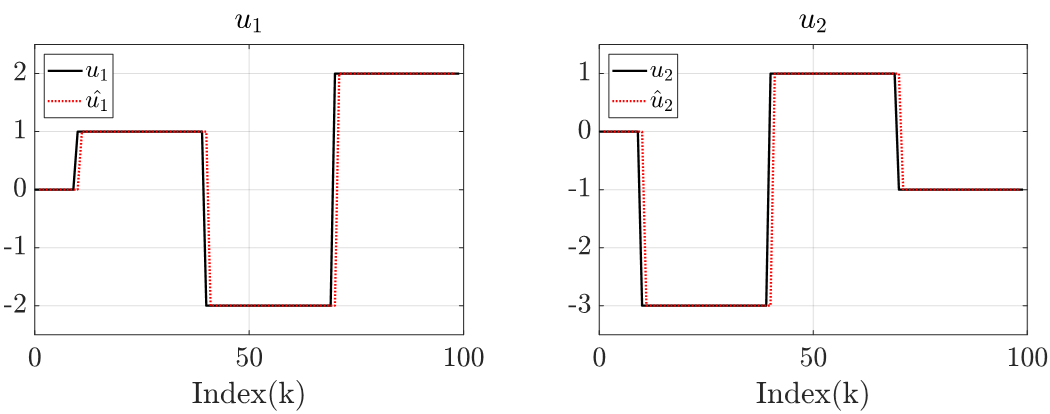}
  \caption{Estimation of inputs}
  \label{fig:eg1_input_est}
\end{figure}
Clear, $\hat u$ and $\hat u$ are identical with $L=1$ step delay. 

Note that $\hat u$ is computed in batches of length $T_f=3$. Iteration $k$ gives $\hat u_{[3k,3k+2]}$.
\end{example}

Algorithm 1 receives $y_{[k,k+T_f+L-1]}$ as an input and  computes $\hat u = u_{[k, k+T_f-1]}$ at a time. This may be cumbersome to implement for real-time applications. We provide Algorithm 2 tailored for real-time application. Here, the input is modified to use $y$ available up to the current instance and compute $\hat u$ of length 1. 

\begin{algorithm}[ht] 
  \begin{algorithmic}[1] 
    \caption{Input Estimation (with $T_f=1$)}
    \STATE {\bf{Setup: }} Set $T_p, L \in \mathbb{Z}$ with $T_p \geq $ (observability index of the system). 
    Make Hankel matrices $U_p$, $U_f$, $Y_p$, $Y_{f_L}$ from persistently exciting input of order $n+T_p+1+L$.
    Get $u_{[-T_p-L,-L-1]}$ and $y_{[-T_p-L,-1]}$.
    Set $k=0$.
    \STATE {\bf{Input: }} $y(k)$ 
    \STATE Solve $g$ for 
    \begin{align*}
        \begin{bmatrix}
          Y_p \\
          U_p \\
          Y_{f_L} 
        \end{bmatrix}g = \begin{bmatrix}
          y_{[k-T_p-L,k-L-1]} \\
          u_{[k-T_p-L,k-L-1]} \\
          y_{[k-L,k]}
        \end{bmatrix}.
    \end{align*}
    \STATE Set $u(k-L) = U_f g$.
    \STATE {\bf{Output: }} $\hat u(k) = u(k-L)$
    \STATE Update $k \leftarrow k+1$
    \STATE Repeat from 2
\end{algorithmic}
\end{algorithm}

\begin{example}
Consider the system of \eqref{sseq} with 
\begin{subequations}\label{example2}
\begin{align}
    A&=  \begin{bmatrix}
        1    & 0.05 & 0    & 0.1  & 0.5  & 0    \\
        0.05 & 1    & 0.05 & 0.05 & 0.1  & 0.05 \\
        0    & 0.05 & 1    & 0    & 0.05 & 0.1  \\
        -0.2 & 0.1  & 0.05 & 0.8  & 0.1  & 0.05 \\
        0.1  & -0.2 & 0.1  & 0.1  & 0.8  & 0.1  \\
        0    & 0.1  & -0.2 & 0.05 & 0.1  & 0.8
    \end{bmatrix},\nonumber     \\
    B&=\begin{bmatrix}
        0.1  & 0    \\
        0.1  & 0    \\
        0    & 0.1  \\
        0.1  & 0.05 \\
        -0.1 & 0.1  \\
        0    & - 0.1
    \end{bmatrix}, \quad  
    C^T = \begin{bmatrix}
        1 & 0 & 0 \\
        0 & 1 & 0 \\
        0 & 0 & 1 \\
        0 & 0 & 0 \\
        0 & 0 & 0 \\
        0 & 0 & 0 
    \end{bmatrix}, \nonumber
\end{align}
\end{subequations}
and $D=0_{3 \times 2}$. This system has $L_0$-delay inverse with $L_0=1$.

Algorithm 2 is used with $L=1$ and $T_p=2$. Figure \ref{fig:eg2_io} shows the input and output sequence. Figure \ref{fig:eg2_inputs_est} shows the $\hat u$ along with $u$. Indeed they are identical with one step delay. 
\begin{figure}[!ht] 
     \centering
     \begin{subfigure}[ht]{0.42\linewidth}
         \centering
         \includegraphics[width=\linewidth]{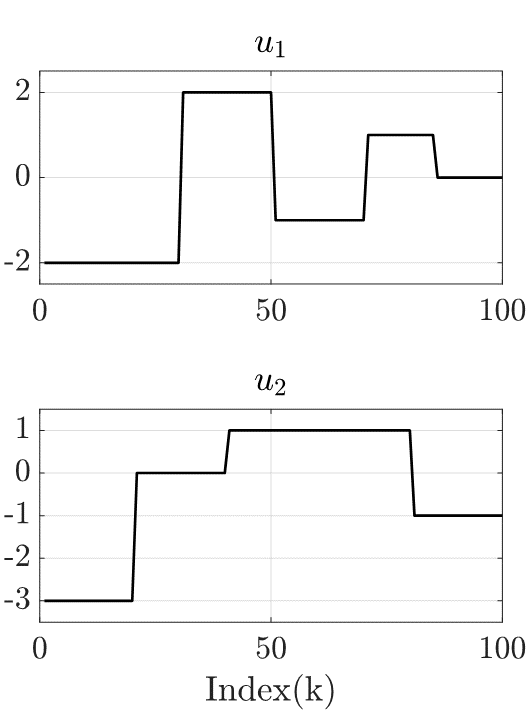}
         \caption{Inputs}
         \label{fig:eg2_input}
     \end{subfigure} \hfill
     \begin{subfigure}[ht]{0.40\linewidth}
         \centering
         \includegraphics[width=\linewidth]{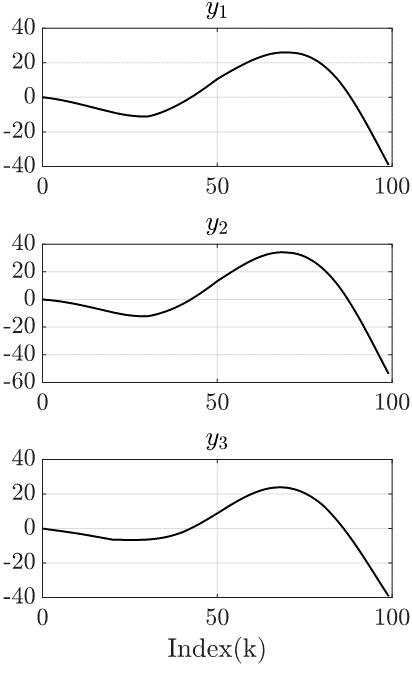}
         \caption{Outputs}
         \label{fig:eg2_output}
     \end{subfigure}
     \caption{Inputs and outputs}
     \label{fig:eg2_io}
\end{figure}

\begin{figure}[!ht]
  \centering 
  \includegraphics[width=0.9\linewidth]{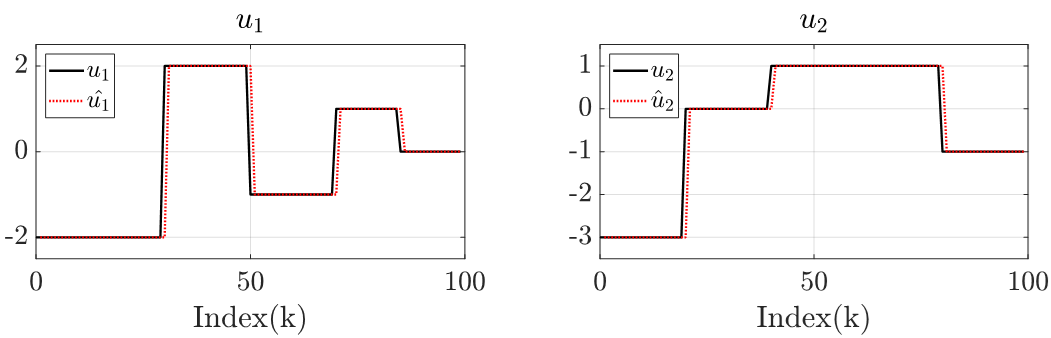}
  \caption{Estimation of inputs}
  \label{fig:eg2_inputs_est}
\end{figure}
\end{example}

\begin{example}\label{exam:3} 
The proposed inversion of a linear system can also be used for finding a constrained input that yields a desired output. 
Consider a desired output $y^*_{[0,T_f+L-1]}$ and an input constraint set $\mathcal{U} \subset \mathbb{R}^m$.
Suppose that we want to find a constrained input sequence $u_{[0,T_f-1]}$ at time $k=0$ when the previous records of input $u_{[-T_p,-1]}$ and output $y_{[-T_p,-1]}$ are available.
Then, solve the minimization problem for $g$:
\begin{align*}
\min_{g} \; & \|Y_{f_L} g - y^*_{[0,T_f+L-1]}\|^2 \\
\text{subject to} \; & \begin{bmatrix} U_p \\ Y_p \end{bmatrix} g = \begin{bmatrix} u_{[-T_p,-1]} \\ y_{[-T_p,-1]} \end{bmatrix} \\
& U_f g \in \mathcal{U} \times \cdots \times \mathcal{U}
\end{align*}
and set $u_{[0,T_f-1]} = U_f g$.
\end{example}

\section{Application to Disturbance Observer}\label{sec:app}
Consider the block diagram of Figure \ref{fig:DDOB}, where $G(z)$ is the plant and the block `{\tt DBINV}' is the data-based inversion block of $G(z)$ implemented by  Algorithm 2.
The signals $u_0$, $y$, $d$, $u$ are, respectively, command input, output, disturbance, and the actual input to the plant. 
Note that the transfer function from $u_0$ to $y$ is  
\begin{equation}\label{DDOBTF}
    Y(z) = G(z)[U_0(z) + (1-z^{-L})D(z)]
\end{equation}
where $Y$, $U_0$, and $D$ are $z$-transform of the corresponding signals, which can also be seen from the figure by
\begin{gather*}
\hat u(k) = d(k-L) + \Delta(k-L), \quad \hat d(k) = d(k-L), \\
\Delta(k) = u_0(k) - \hat d(k) = u_0(k) - d(k-L).
\end{gather*}
From \eqref{DDOBTF} it is seen that, when $d(k)$ is slowly varying, the effect of $d$ is approximately removed from the input. 
This is exactly the idea of the model-based disturbance observer \cite{Shim}.
In this sense, we refer the proposed structure as {\it data-driven disturbance observer}.

\begin{figure}[!ht]
  \centering 
  \includegraphics[width=\linewidth]{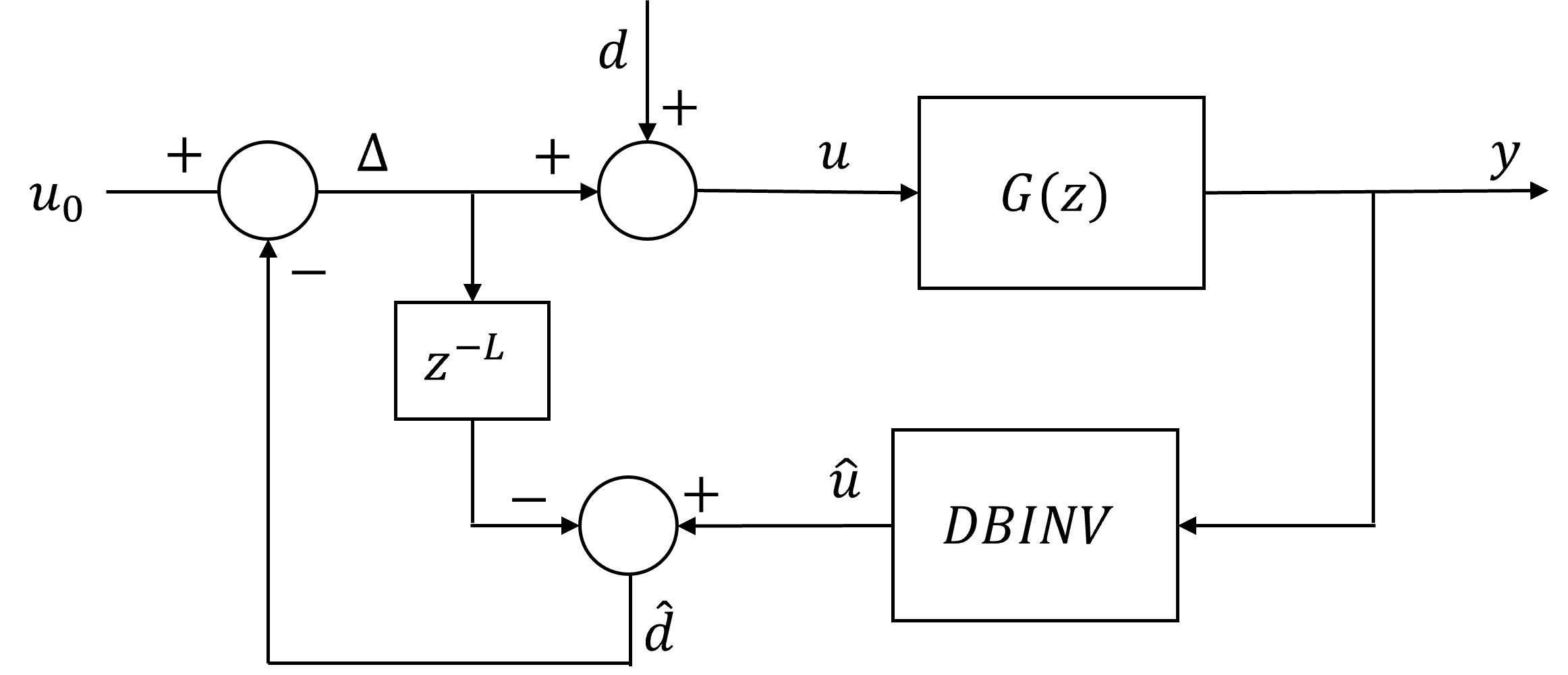}
  \caption{Data-driven Disturbance Observer}
  \label{fig:DDOB}
\end{figure}

\begin{example}
Consider the block diagram of Figure \ref{fig:DDOB} where a realization of $G(z)$ is given by the system of \eqref{example1}. 
We assume that the signal $u_0$ is the same as the signal shown in Figure \ref{fig:eg1_io}(a) and the signal $d$ is given in Figure \ref{fig:eg2_DDOB}(a). 
The corresponding output is shown in Figure \ref{fig:eg2_DDOB}(b). 
Clearly, the effect of disturbance is removed after some transients.

\begin{figure}[!ht]
     \centering
     \begin{subfigure}[ht]{0.45\linewidth}
         \centering
         \includegraphics[width=\linewidth]{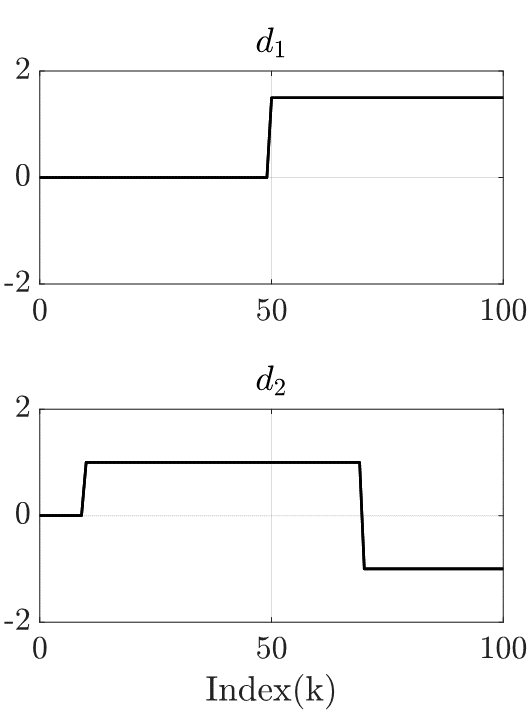}
         \caption{Disturbances}
         \label{fig:eg2_disturbances2}
     \end{subfigure} \hfill
     \begin{subfigure}[ht]{0.45\linewidth}
         \centering
         \includegraphics[width=\linewidth]{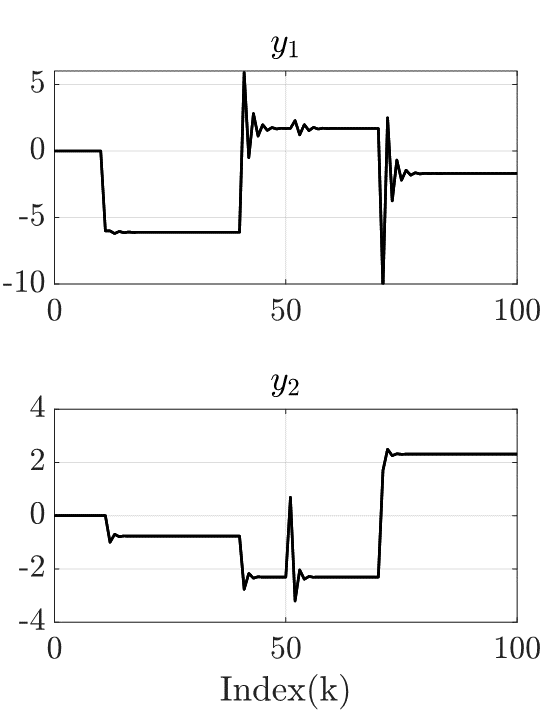}
         \caption{Outputs}
         \label{fig:eg2_outputs2}
     \end{subfigure}
     \caption{Disturbances and outputs with disturbance observer}
     \label{fig:eg2_DDOB}
\end{figure}

The simulation results agree with the derivation in \eqref{DDOBTF} and we emphasize that no information of $G(z)$ is directly used, and thus, the term of \emph{data-driven disturbance observer} is justified.

\end{example}

\section{Conclusions}
A data-based construction of inverse dynamics has been developed for discrete-time LTI systems. 
The notion of $L$-delay inverse is invoked from the literature, and it is combined with the system description method from behavioral approach. 
The outcome is data-based representation of inverse dynamics, which is similar to that of LTI systems, but differs in that input is estimated with some delay. 
The result is applied to build disturbance observers from no system information but collected input and output data that satisfy a level of persistency of excitation condition. 
Applications and extension this result seem to be immense.

\end{document}